\journalname{JOTA}
\newcommand{\R}{\mathbb{R}}
\newcommand{\mbf}[1]{\mathbf{#1}}
\newcommand{\re}{\textup{Re}}
\newcommand{\im}{\textup{Im}}
\newcommandx{\vecS}[2][2=]{\mathbf{\mathcal{S}}^{\mathbf{#1}}_{#2}}
\newcommandx{\uvecS}[2][2=]{\underline{\mathbf{\mathcal{S}}}^{\mathbf{#1}}_{#2}}
\newcommandx{\vecW}[1][1=]{\mathbf{\mathcal{W}}_{#1}}
\newcommandx{\uvecW}[1][1=]{\underline{\mathbf{\mathcal{W}}}_{#1}}
\newcommandx{\vecA}[1][1=]{\mathbf{\mathcal{A}}_{#1}}
\newcommandx{\uvecA}[1][1=]{\underline{\mathbf{\mathcal{A}}}_{#1}}
\begin{document}

\title{Sparse optimization for estimating the cross-power spectrum in linear inverse models : from theory to the application in brain connectivity}
\titlerunning{Brain cross-power spectrum sparse optimization}


\author{Laura Carini, Isabella Furci, Sara Sommariva}

\institute{Laura Carini \at
              University of Genoa, Geona, Italy \\
              laura.carini@edu.unige.it
           \and
           Isabella Furci \at
              University of Genoa, Geona, Italy \\
              isabella.furci@unige.it
            \and
              Sara Sommariva,  Corresponding author  \at
              University of Genoa, Geona, Italy \\
              sara.sommariva@unige.it
}

\date{Received: date / Accepted: date}

\maketitle

\begin{abstract}
In this work we present a computationally efficient linear optimization approach for estimating the cross--power spectrum of an hidden multivariate stochastic process from that of another observed process. Sparsity in the resulting estimator of the cross--power is induced through $\ell_1$ regularization and the Fast Iterative Shrinkage-Thresholding Algorithm (FISTA) is used for computing such an estimator. With respect to a standard implementation, we prove that a proper initialization step is sufficient to guarantee the required symmetric and antisymmetric properties of the involved quantities. Further, we show how structural properties of the forward operator can be exploited within the FISTA update in order to make our approach adequate also for large--scale problems such as those arising in context of brain functional connectivity.

The effectiveness of the proposed approach is shown in a practical scenario where we aim at quantifying the statistical relationships between brain regions in the context of non-invasive electromagnetic field recordings. Our results show that our method provide results with an higher specificity that classical approaches based on a two--step procedure where first the hidden process describing the brain activity is estimated through a linear optimization step and then the cortical cross--power spectrum is computed from the estimated time--series.



\end{abstract}
\keywords{Sparse Linear Optimization \and Large Scale Linear Systems \and Statistical multivariate analysis \and Stochastic Processes \and Functional Brain Connectivity}
\subclass{62H12 \and 65F50 \and	15A69 \and 65K10}


\section{Introduction}
\label{sec:intro}
We consider the problem of computing sparse estimation of the complex-valued cross-power spectrum of a multivariate stochastic process $\left\{ \mathbf{X}(t) \right\}_{t \in \mathbb{R}}$, whose realizations, $\left\{ \mathbf{x}(t) \right\}_{t \in \mathbb{R}}$, can only be observed indirectly through the realization, $\left\{\mathbf{y}(t)\right\}_{t \in \mathbb{R}}$, of another observable stochastic process $\left\{ \mathbf{Y}(t) \right\}_{t \in \mathbb{R}}$. In particular, we show that if $\mathbf{Y}(t)$ is a linear combination of $\mathbf{X}(t)$ corrupted by additive noise,
then the problem can be formulated as a large-scale linear optimization problem \cite{vallarino2020}, whose dimension depends on the number of both the observations and the unknown variables.

Our work is motivated by the study of cortical functional connectivity from magneto-/electro-encephalographic (M/EEG) data, which consists in quantifying the statistical relationships between the activity of multiple brain regions from the electromagnetic field non-invasively recorded outside the scalp \cite{pereda_05,sakkalis_11}. From a mathematical point of view, this is typically achieved through a two-step procedure \cite{schoffelen2019}. First a time--series describing the activity of different brain regions is estimated from the recorded electromagnetic field by solving an ill-posed linear optimization problem \cite{baillet2001,ilsa19}. Then a proper connectivity metric is computed between the estimated time--series. Due to the oscillatory nature of the neural sources \cite{fr05}, many connectivity metrics are defined in the frequency domain, starting from the cross-power spectrum of the time--series describing the estimated brain activity \cite{nolte04,sommariva2019}.

A number of recent studies have demonstrated that this two--step procedure is inherently sub-optimal. When the initial optimization step is achieved through Tikhonov regularization \cite{hail94} a regularization parameter has to be set, and this is typically done in order to obtain the best possible estimate of the brain activity. However, the cross-power spectrum computed from the time--series estimated with this value of the regularization parameter is usually sub-optimal, in the sense that better estimate can be obtained from time--series computed using a different value of the regularization parameter \cite{vallarino2020,vallarino2023} depending on features of the neural time-courses themselves, such as their spectral complexity \cite{vallarino2021}. Similar results were obtained in a simulation setup where Tikhonov regularization was used to estimate the power spectrum of the neural sources, and connectivity was quantified by using coherence (a normalized version of the cross-power spectrum) \cite{hietal16}.

Furthermore, the estimates of the cortical cross-power spectrum obtained through this two--step procedure are usually affected by a large number of false positives, that is they identify statistical interactions between pairs of brain regions that are actually independent. One of the main causes of these spurious interactions is the residual mixing (or source--leakage) of the hidden process describing brain activity \cite{schoffelen2019}. To mitigate this issue, metrics insensitive to zero-lag interactions can be computed from the estimated cross--power spectrum, such as the imaginary part of coherency \cite{nolte04} or the weighted phase lag index \cite{stam2007,vinck2011}. However, such metrics have the obvious drawback that they also ignore instantaneous (or nearly instantaneous) true interactions \cite{ossadtchi2018}.

In the last years, few optimization techniques have been proposed to avoid such two--step procedure. 
In \cite{fukushima2015} the authors propose a new method of MEG source reconstruction that simultaneously estimates the source amplitudes and interactions across the whole brain by a variational Bayesian algorithm; in particular, they use a multivariate autoregressive (AR) model to represent directed interaction between sources together with a prior knowledge on structural brain connectivity inferred from diffusion magnetic resonance imaging (dMRI).
In \cite{tronarp2018} a Kalman filter is used to achieve a simultaneous estimation of source activities and their dynamic functional connectivity. However, also in this case AR models are used for representing source interactions. 
Ossadtchi and colleagues \cite{ossadtchi2018} introduce a novel projection matrix that operates on the cross--power spectrum of the observable process in order to mitigate the contribution of source--leakage to its real part. Once applied the projector, an optimization step still needs to be carried for estimating the cortical cross--power spectrum from the modified sensor--level data.

In this work we suggest an alternative approach where the cross--power spectrum of the hidden process describing the cortical activity is directly estimated from that of observed M/EEG time--series by setting up a linear optimization problem. In order to reduce the number of false positive, $\ell_1$ regularization \cite{figueiredo2007} is used so as to promote sparsity in the resulting estimator of the cortical cross--power spectrum. The Fast Iterative Shrinkage-Thresholding Algorithm (FISTA) \cite{beck2009} is a classical approach for solving linear optimization problems with $\ell_1$ penalty. However, its application to our problem is not trivial due to the high dimension of the matrix describing the forward operator which typically includes few hundreds of rows and few thousands of columns. Here we develop a computational strategy for efficiently carrying on the FISTA update by exploiting structural properties of the forward operator. The proposed method is validated on a large set of simulated data showing that our approach is capable of detecting the truly interacting sources but with an higher specificity than the classical two--step approach. The Python codes implementing the proposed approach are freely available at the
GitHub repository \url{https://github.com/theMIDAgroup/fista_cps_conn}.

The paper is organised as follows. In Section \ref{sec:problem} we derive the forward model relating the cross--power spectrum of the hidden process $\left\{ \mathbf{X}(t) \right\}_{t \in \mathbb{R}}$ to that of the observable one. In Section  \ref{sec:fista} we present the novel one--step approach for estimating the cross--power spectrum of  $\left\{ \mathbf{X}(t) \right\}_{t \in \mathbb{R}}$ and we recall a classical two-step approach used as benchmark, while Section \ref{sec:smart_computation} focuses on the strategy we develop to make our approach computationally affordable. Finally, numerical validations of the presented approach are shown in Section \ref{sec:num} and our conclusions are discussed in Section \ref{sec:conclusions.}.

\section{Problem formulation and forward modeling}
\label{sec:problem}
Throughout the paper we assume without loss of generality that the mean values of all the considered stochastic processes are zero. 

\begin{definition}\label{def:cps}
Let $\left\{ \mathbf{X}(t) \right\}_{t \in \mathbb{R}}$ and $\left\{ \mathbf{E}(t) \right\}_{t \in \mathbb{R}}$ be two real-valued, multivariate, stationary stochastic processes of dimension $n$ and $m$, respectively. Denoted with $\left\{\mathbf{\Gamma}^{\mathbf{X}\mathbf{E}}(\tau)\right\}_{t \in \mathbb{R}}$ the corresponding covariance function, that is $\mathbf{\Gamma}^{\mathbf{X}\mathbf{E}}(\tau) = \mathbb{E} \left[ \mathbf{X}(t)\mathbf{E}(t + \tau)^\top \right]$, we assume $\Gamma_{j,k}^{\mathbf{X}\mathbf{E}}(\tau)$ to be absolutely  integrable for all $j=1, \dots, n$ and  $k=1, \dots, m$. \\
Then, the cross-power spectrum between $\mathbf{X}(t)$ and $\mathbf{E}(t)$ is a one--parameter family of complex--valued matrices $\mathbf{S}^{\mathbf{X}\mathbf{E}}(f) \in \mathbb{C}^{n \times m}$ defined as \cite{bendat2011}
\begin{equation}
   S^{\mathbf{X}\mathbf{E}}(f) = \int_{-\infty}^{+\infty}  \Gamma^{\mathbf{X}\mathbf{E}}(\tau) e^{-2 \pi i f \tau} {\rm d} \tau,
\end{equation}
where the integral is computed component-wise.
\end{definition}

\begin{remark}
Following Definition \ref{def:cps}, we define the cross-power spectrum of the process $\{\mathbf{X}(t)\}_{t \in \mathbb{R}}$ as a one--parameter family of Hermitian matrices of size $n \times n$ denoted as $\mathbf{S}^{\mathbf{X}}(f) := \mathbf{S}^{\mathbf{X}\mathbf{X}}(f)$. Analogously, we denote $\mathbf{\Gamma}^{\mathbf{X}}(\tau) := \mathbf{\Gamma}^{\mathbf{X}\mathbf{X}}(\tau)$.
\end{remark}

\begin{theorem}\label{teo:fwd_model}
Let $\left\{ \mathbf{X}(t) \right\}_{t \in \mathbb{R}}$ and $\left\{ \mathbf{Y}(t) \right\}_{t \in \mathbb{R}}$ be two real-valued, multivariate, stationary stochastic processes of dimension $n$ and $m$, respectively. We further assume that $\mathbf{Y}(t)$ is a linear mixture of the components of $\mathbf{X}(t)$ corrupted by independent additive noise, that is
\begin{equation}\label{eq:fwd}
    \mathbf{Y}(t) = \mathbf{G}\mathbf{X}(t) + \mathbf{E}(t) \, , 
\end{equation}
where $\mathbf{G} \in \mathbb{R}^{m \times n}$ and $\left\{\mathbf{E}(t) \right\}_{t \in \mathbb{R}}$ is a multivariate, stationary stochastic process independent from $\left\{\mathbf{X}(t) \right\}_{t \in \mathbb{R}}$. Then, 
\begin{equation}\label{eq:fwd_matrix}
    \mathbf{S}^{\mathbf{Y}}(f) = \mathbf{G} \mathbf{S}^{\mathbf{X}}(f) \mathbf{G}^\top + \mathbf{S}^{\mathbf{E}}(f) \, .
\end{equation}
\end{theorem}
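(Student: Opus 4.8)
The plan is to work entirely at the level of the covariance functions and only at the end pass to the frequency domain via the defining Fourier integral in Definition \ref{def:cps}, exploiting the linearity of both the expectation operator and the Fourier transform.

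First I would compute the auto-covariance $\mathbf{\Gamma}^{\mathbf{Y}}(\tau) = \mathbb{E}[\mathbf{Y}(t)\mathbf{Y}(t+\tau)^\top]$ by substituting the forward model \eqref{eq:fwd}. Expanding the product $(\mathbf{G}\mathbf{X}(t) + \mathbf{E}(t))(\mathbf{G}\mathbf{X}(t+\tau) + \mathbf{E}(t+\tau))^\top$ and using linearity of $\mathbb{E}$ together with the fact that $\mathbf{G}$ is deterministic yields four contributions: the term $\mathbf{G}\,\mathbf{\Gamma}^{\mathbf{X}}(\tau)\,\mathbf{G}^\top$, two mixed terms $\mathbf{G}\,\mathbf{\Gamma}^{\mathbf{X}\mathbf{E}}(\tau)$ and $\mathbf{\Gamma}^{\mathbf{E}\mathbf{X}}(\tau)\,\mathbf{G}^\top$, and the term $\mathbf{\Gamma}^{\mathbf{E}}(\tau)$.

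The key step is to show that the two mixed terms vanish. Since $\{\mathbf{X}(t)\}_{t \in \mathbb{R}}$ and $\{\mathbf{E}(t)\}_{t \in \mathbb{R}}$ are independent and both have zero mean (by the standing assumption at the beginning of the section), each entry of $\mathbf{\Gamma}^{\mathbf{X}\mathbf{E}}(\tau) = \mathbb{E}[\mathbf{X}(t)\mathbf{E}(t+\tau)^\top]$ factorizes as $\mathbb{E}[X_j(t)]\,\mathbb{E}[E_k(t+\tau)] = 0$, and analogously $\mathbf{\Gamma}^{\mathbf{E}\mathbf{X}}(\tau) = \mathbf{0}$. Hence one is left with $\mathbf{\Gamma}^{\mathbf{Y}}(\tau) = \mathbf{G}\,\mathbf{\Gamma}^{\mathbf{X}}(\tau)\,\mathbf{G}^\top + \mathbf{\Gamma}^{\mathbf{E}}(\tau)$.

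Finally I would apply the cross-power spectrum definition to $\mathbf{\Gamma}^{\mathbf{Y}}$, pulling the constant matrices $\mathbf{G}$ and $\mathbf{G}^\top$ out of the (component-wise) integral and invoking linearity of the Fourier transform to recover \eqref{eq:fwd_matrix}. I expect the \emph{main obstacle} to be not algebraic but rather the well-posedness of the spectra: one must check that the entries of $\mathbf{\Gamma}^{\mathbf{Y}}$ inherit absolute integrability so that the defining Fourier integral exists, which follows from the assumed integrability of the entries of $\mathbf{\Gamma}^{\mathbf{X}}$ and $\mathbf{\Gamma}^{\mathbf{E}}$ together with the finiteness of the entries of $\mathbf{G}$; the interchange of the constant matrix $\mathbf{G}$ with the integral is then justified by that same linearity.
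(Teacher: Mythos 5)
Your proposal is correct and follows essentially the same route as the paper's proof: expand $\mathbf{\Gamma}^{\mathbf{Y}}(\tau)$ via the forward model, annihilate the mixed covariance terms, and pass to the frequency domain by linearity of the Fourier transform. If anything, you are slightly more careful than the paper, which invokes independence alone to conclude $\mathbf{\Gamma}^{\mathbf{X}\mathbf{E}}(\tau)=\mathbf{0}$, whereas you correctly observe that the standing zero-mean assumption is also needed for the entrywise factorization $\mathbb{E}[X_j(t)]\,\mathbb{E}[E_k(t+\tau)]=0$, and you additionally flag the absolute integrability of the entries of $\mathbf{\Gamma}^{\mathbf{Y}}$ required for $\mathbf{S}^{\mathbf{Y}}(f)$ to be well defined.
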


\proof{We observe that from the linearity of the expectation and the hypothesis on $\mathbf{Y}(t)$ it follows
\begin{displaymath}
\begin{split}
\mathbf{\Gamma}^{\mathbf{Y}}(\tau) & = \mathbb{E}\left[(\mathbf{G}\mathbf{X}(t)+\mathbf{E}(t)) (\mathbf{G}\mathbf{X}(t+\tau)+\mathbf{E}(t+\tau))^\top \right] \\
& = \mathbf{G} \mathbf{\Gamma}^{\mathbf{X}}(\tau) \mathbf{G}^\top + \mathbf{G} \mathbf{\Gamma}^{\mathbf{X}\mathbf{E}}(\tau) + \left( \mathbf{G} \mathbf{\Gamma}^{\mathbf{X}\mathbf{E}}(\tau)  \right)^\top +  \mathbf{\Gamma}^{\mathbf{E}}(\tau) \\
& = \mathbf{G} \mathbf{\Gamma}^{\mathbf{X}}(\tau) \mathbf{G}^\top +  \mathbf{\Gamma}^{\mathbf{E}}(\tau), \; 
\end{split}
\end{displaymath}
where the last equality holds due to the fact that $\left\{\mathbf{X}(t) \right\}_{t \in \mathbb{R}}$ and $\left\{\mathbf{E}(t) \right\}_{t \in \mathbb{R}}$ are independent and thus $\mathbf{\Gamma}^{\mathbf{X}\mathbf{E}}(\tau) = 0$ for all $\tau$. The thesis follows from Definition \ref{def:cps} by eploiting the linearity of the Fourier transform.\\}

Theorem \ref{teo:fwd_model} defines a linear relationship between the cross-power spectrum of the observable process $\{\mathbf{Y}(t)\}_{t \in \mathbb{R}}$ and that of the unknown process $\{\mathbf{X}(t)\}_{t \in \mathbb{R}}$. More explicitly, we denote with $\vecS{X}(f) \in \mathbb{C}^{n^2}$, and $\vecS{Y}(f), \vecS{E}(f) \in \mathbb{C}^{m^2}$ the vector obtained by stacking the columns ($\text{vec}(\cdot)$ operator) of matrices $\mathbf{S}^{\mathbf{X}}(f)$, $\mathbf{S}^{\mathbf{Y}}(f)$, and $\mathbf{S}^{\mathbf{E}}(f)$, respectively. From Eq. (\ref{eq:fwd_matrix}) it follows
\begin{equation}\label{eq:fwd_vec}
        \vecS{Y}(f) = \left( \mathbf{G} \otimes \mathbf{G} \right) \vecS{X}(f)  + \vecS{E}(f),
\end{equation}
where $\otimes$ is the Kronecker product.

By splitting real and imaginary parts, Eq. (\ref{eq:fwd_vec}) can be rewritten so to include only real-valued quantities:
\begin{equation}\label{eq:fwd_vec2}
\left( \begin{array}{c}
\re(\vecS{Y}(f)) \\
\im(\vecS{Y}(f)) \end{array} \right)
= \left( \begin{array}{cc}
\mbf{G} \otimes \mbf{G} & 0 \\
0 & \mbf{G} \otimes \mbf{G} \end{array} \right) 
\left( \begin{array}{c}
\re(\vecS{X}(f)) \\
\im(\vecS{X}(f)) \end{array} \right)
  + \left( \begin{array}{c}
\re(\vecS{E}(f)) \\
\im(\vecS{E}(f)) \end{array} \right) \; .
\end{equation}
We finally remark that, since $\mbf{S}^{\mbf{X}}(f)$ is an Hermitian matrix, $\re(\vecS{X}(f))$ and $\im(\vecS{X}(f))$ are the vectorization of a symmetric and antisymmetric matrix, respectively.  

\section{Inverse modeling}
\label{sec:fista}

Given a realization $\{\mathbf{y}(t)\}_{t \in \mathbb{R}}$ of the observable process $\{\mathbf{Y}(t)\}_{t \in \mathbb{R}}$ our approach aims at providing a sparse estimation of the cross-power spectrum $\mathbf{S}^{\mathbf{X}}(f)$ by exploiting the model described by Eq. (\ref{eq:fwd_vec2}). When studying cortical functional connectivity from M/EEG data, this gives rise to a high-dimensional optimization problem. In fact the forward matrix 
\begin{equation}\label{eq:def_fwd_matrix}
\boldsymbol{\mathcal{G}} := \left( \begin{array}{cc}
\mbf{G} \otimes \mbf{G} & 0 \\
0 & \mbf{G} \otimes \mbf{G} \end{array} \right) \, 
\end{equation}
has size $2m^2 \times 2n^2$, where, in connectivity studies, $m \propto 10^2$ is the number of M/EEG sensors and $n \propto 10^3$ is the number of cortical locations where the brain activity and connectivity are estimated. Hence the matrix $\boldsymbol{\mathcal{G}} $ has size proportional to $(2 \cdot 10^4) \times (2 \cdot 10^6)$ making it necessary to develop proper computational strategies, described in Section \ref{sec:smart_computation}, to carry out any algorithmic step involving such a matrix.

Throughout the manuscript our approach is compared to a classical two-step approach summarized in the next subsection. 


\subsection{Benchmark: classical two--step approach}\label{sec:two_step}
A widely used approach for estimating the cross-power spectrum $\mathbf{S}^{\mathbf{X}}(f)$ consists in the following two steps \cite{schoffelen2019,vallarino2021,vallarino2023}.

\begin{description}
    \item [Step 1.] A regularised estimate, $\left\{\mbf{x}_\lambda(t)\right\}_{t \in \mathbb{R}}$, of the unknown process $\left\{\mbf{X}(t)\right\}_{t \in \mathbb{R}}$, is obtained by solving the inverse problem associated to equation \eqref{eq:fwd}. Here we consider the Tikhonov estimator \cite{tikhonov1963s}, which is defined as
    \begin{equation}\label{eq:mne_onestep}
        \mbf{x}_\lambda(t) = \arg \min_{\mbf{x}(t)}\left\{\|\mbf{G}\mbf{x}(t)-\mbf{y}(t)\|^2_2+\lambda\|\mbf{x}(t)\|^2_2\right\} \quad \forall t \in T \, , 
    \end{equation}
    where $\lambda$ is a proper regularisation parameter, $\|\cdot\|_2$ is the $\ell_2$-norm, and $T \subset \mathbb{N}$ is the discrete set of time points where the data were collected.
    \item[Step 2.] The Welch's estimator, $\mbf{S}^{\mbf{x}_\lambda}(f)$, of the time--series $\left\{\mbf{x}_\lambda(t)\right\}_{t \in T}$  is computed as follows \cite{welch1967}. First the time--series $\left\{\mbf{x}_\lambda(t)\right\}_{t \in T}$ is partitioned in $P$ overlapping segments of length $L$, denoted as $\left\{\mbf{x}^{(p)}_\lambda(\tau)\right\}_{\tau=0, \dots, L-1}$ with $p\in \left\{1, \dots, P\right\}$, and the discrete Fourier transform  
    \begin{equation}
    \mbf{\hat{x}}_\lambda^{(p)}(f) = \frac{1}{L}\sum_{\tau=0}^{L-1}\mbf{x}_\lambda^{(p)}(\tau)w(\tau)e^{\frac{-2\pi i \tau f}{L}}
    \end{equation}
    is computed for each segment, $\left\{ w(\tau)\right\}_{\tau=0, \dots, L-1}$ being the Hamming window \cite{hamming89}. Then we defined
    \begin{equation}
    \mbf{S}^{\mbf{x}_\lambda}(f) =\frac{L}{PW}\sum_{p=1}^P\hat{\mbf{x}}_\lambda^p(f) \hat{\mbf{x}}_\lambda^p(f)^H,
    \end{equation}
    where $W = \frac{1}{L}\sum_{t=0}^{L-1}w(t)^2$ .
\end{description}


\subsection{One--step approach for estimating the cross--power spectrum of the hidden process}
Fixed a frequency $f$, we propose to estimate the cross-power spectrum $\mathbf{S}^{\mathbf{X}}(f)$ through a least-squares approach with $\ell_1$ regularization applied to the model in Eq. (\ref{eq:fwd_vec2}). Hence, after computing the Welch's estimator $\mbf{S}^{\mbf{y}}(f)$ of the cross--power spectrum of the observed data,  we define \cite{tibshirani1996,figueiredo2007}

\begin{multline}\label{eq:l1_opt_pb}
 \left( \begin{array}{c}
\widehat{\re(\vecS{X})} \\
\widehat{\im(\vecS{X})} \end{array} \right)  =  \\ \underset{ \mbf{s}}{\mathrm{argmin}} \left\{ \Bigg|\Bigg| \left( \begin{array}{cc}
\mbf{G} \otimes \mbf{G} & 0 \\
0 & \mbf{G} \otimes \mbf{G} \end{array} \right) 
\mbf{s}- \left( \begin{array}{c}
\re(\vecS{y}) \\
\im(\vecS{y}) \end{array} \right) \Bigg|\Bigg|_2^2 + \lambda \|\mbf{s}\|_1\right\},
\end{multline}
where, for the sake of readability, we set $\mbf{s}=\begin{pmatrix}
     \mbf{s}_1\\
     \mbf{s}_2
 \end{pmatrix}$
 and we omit the specification of the argument $f$.

In order to solve the optimization problem in (\ref{eq:l1_opt_pb}) we used the Fast Iterative Shrinkage-Thresholding Algorithm (FISTA) \cite{beck2009} implemented as described in Algorithm \ref{alg:fista_one_step}. In particular we observe that in line 4 and 6 of the algorithm we have exploited the block-diagonal structure of the forward matrix defined in Eq. (\ref{eq:def_fwd_matrix}).

Following the original paper by Beck and Teboulle \cite{beck2009}, in Algorithm \ref{alg:fista_one_step} we used the shrinkage operator $\mathcal{T}_{\frac{\lambda}{L}} : \mathbb{R}^{n^2} \rightarrow \mathbb{R}^{n^2}$ defined as, for all $  i = 1, \dots, n^2$,

 \begin{equation}\label{eq:shrinkage_op}
\mathcal{T}_\frac{\lambda}{L}(\mbf{x})_i := \left(|x_i| - \frac{\lambda}{L} \right)^+ \text{sign}(x_i) = \begin{cases}
      x_i-\frac{\lambda}{L} & \text{if }x_i\geq\frac{\lambda}{L}\\
      0& \text{if } |x_i| \leq \frac{\lambda}{L}\\
      x_i+\frac{\lambda}{L} & \text{if }x_i\leq-\frac{\lambda}{L}
    \end{cases},
\end{equation}
where for all $z \in \mathbb{R}$, $(z)^+ := \max \left\{z, 0 \right\}$ and $\text{sign}(z)$ denote the ramp function (or positive part) and the sign function, respectively, and
\begin{equation}\label{eq:def_L}
L = 2\lambda_{\max}(\boldsymbol{\mathcal{G}} ^\top\boldsymbol{\mathcal{G}} )  \, , 
\end{equation}
being $\lambda_{\max}(\boldsymbol{\mathcal{G}} ^\top\boldsymbol{\mathcal{G}} )$ the highest eigenvalue of the squared matrix $\boldsymbol{\mathcal{G}} ^\top\boldsymbol{\mathcal{G}} $.

 \begin{remark} The constant $L$ in  Eq. (\ref{eq:def_L}) is the smallest Lipschitz constant of the gradient of a function \cite{beck2009}, that is, $\mbf{f}:\mathbb{R}^{2n^2} \rightarrow \mathbb{R}$ defined as $$\mbf{f}\begin{pmatrix}
     \mbf{s}_1\\
     \mbf{s}_2
 \end{pmatrix} = \left\|\boldsymbol{\mathcal{G}}\begin{pmatrix}
     \mbf{s}_1\\
     \mbf{s}_2
 \end{pmatrix} - \begin{pmatrix}
 \re(\vecS{y})\\ 
 \im(\vecS{y})\end{pmatrix}\right\|_2^2.$$
 Furthermore, by exploiting the block-diagonal structure of $\boldsymbol{\mathcal{G}}$ and the properties of the Kronecker product it holds
 \begin{equation}
 \begin{split}
 L  & = 2\lambda_{\max} \left( \left( \begin{array}{cc}
\mbf{G} \otimes \mbf{G} & 0 \\
0 & \mbf{G} \otimes \mbf{G} \end{array} \right)^\top \left( \begin{array}{cc}
\mbf{G} \otimes \mbf{G} & 0 \\
0 & \mbf{G} \otimes \mbf{G} \end{array} \right) \right) \\
& = 2\lambda_{\max}  \left( \begin{array}{cc}
\mbf{G}^\top \mbf{G} \otimes \mbf{G}^\top \mbf{G} & 0 \\
0 & \mbf{G}^\top \mbf{G} \otimes \mbf{G}^\top \mbf{G} \end{array} \right)  \\
& = 2 \left[\lambda_{\max} (\mbf{G}^\top \mbf{G})\right]^2.
 \end{split}
\end{equation}
 \end{remark}

\begin{algorithm}
\caption{FISTA for one--step estimation of the cross--power spectrum $\mbf{S}^{\mbf{x}}$}\label{alg:fista_one_step}
\renewcommand{\algorithmicrequire}{\textbf{Input:}}
\renewcommand{\algorithmicensure}{\textbf{Output:}}
\begin{algorithmic}[1]
\REQUIRE $\mbf{G} \in \mathbb{R}^{m \times n}$; $\lambda$, $L$, $K$, $\varepsilon > 0$; $\mbf{S}^{\mbf{y}} \in \mathbb{C}^{m \times m}$ and $\mbf{S}_0 \in \mathbb{C}^{n \times n}$ Hermitian matrices and corresponding vectorization  $\vecS{\mbf{y}} := \text{vec}(\mbf{S}^{\mbf{y}}) \in \mathbb{C}^{m^2}$ and $\vecS{}[0] := \text{vec}(\mbf{S}_0) \in \mathbb{C}^{n^2}$.
\vspace{3pt}
\STATE \textbf{Initialize:} $k=0$, $t_0 = 0$,  
$
\begin{pmatrix} \mbf{w}_{0,1} \\ \mbf{w}_{0,2} \end{pmatrix} = \begin{pmatrix} \mbf{s}_{0,1} \\ \mbf{s}_{0,2} \end{pmatrix} = \begin{pmatrix} \re(\vecS{}[0]) \\ \im(\vecS{}[0]) \end{pmatrix}
$
\vspace{3pt}
\WHILE{$k \leq K$ \AND $e \leq \varepsilon$}
\vspace{3pt}
    \STATE $k = k + 1$
    \vspace{3pt}
    \STATE 
    \vspace{3pt}
    $
    \begin{pmatrix} \mbf{s}_{k,1} \\ \mbf{s}_{k,2} \end{pmatrix} = 
    \begin{pmatrix}  
     \mathcal{T}_{\frac{\lambda}{L}} \left( \mbf{w}_{k-1,1} - \frac{2}{L} (\mbf{G} \otimes \mbf{G})^\top((\mbf{G} \otimes \mbf{G}) \mbf{w}_{k-1,1} - \re(\vecS{y})) \right) \\
     \mathcal{T}_{\frac{\lambda}{L}} \left( \mbf{w}_{k-1,2} - \frac{2}{L} (\mbf{G} \otimes \mbf{G})^\top((\mbf{G} \otimes \mbf{G}) \mbf{w}_{k-1,2} - \im(\vecS{y})) \right)
    \end{pmatrix}
    $
    \vspace{3pt}
    \STATE $t_k = \frac{1 + \sqrt{1+4t_{k-1}^2}}{2}$
    \vspace{3pt}
    \STATE
    \vspace{3pt}
    $
    \begin{pmatrix} \mbf{w}_{k,1} \\ \mbf{w}_{k,2} \end{pmatrix} = 
    \begin{pmatrix} \mbf{s}_{k,1} \\ \mbf{s}_{k,2} \end{pmatrix} + \frac{t_{k-1} - 1}{t_k} \begin{pmatrix} \mbf{s}_{k,1} - \mbf{s}_{k-1,1} \\ \mbf{s}_{k,2} - \mbf{s}_{k-1,2} \end{pmatrix}
$
\vspace{3pt}
    \STATE 
    \vspace{3pt}
    $
    e = \frac{||\mbf{s}_{k,1} - \mbf{s}_{k-1,1}||_1 + ||\mbf{s}_{k,2} - \mbf{s}_{k-1,2}||_1}{||\mbf{s}_{k,1}||_1 + ||\mbf{s}_{k,2}||_1}
    $
    \vspace{3pt}
\ENDWHILE
\vspace{3pt}
\ENSURE $\begin{pmatrix} \mbf{s}_{k,1} \\ \mbf{s}_{k,2} \end{pmatrix}$
\end{algorithmic}
\end{algorithm}

Since we seek a solution to the optimization problem (\ref{eq:l1_opt_pb}) to be interpreted as the cross-power spectrum of a stochastic process, $\mbf{s}_1$ and $\mbf{s}_2$ need to be the vectorization of a symmetric and antisymmetric matrix, respectively. The following results prove that this can be achieved by initializing FISTA through a random Hermitian matrix, as done in Algorithm \ref{alg:fista_one_step}.

\begin{proposition}\label{prop:shrinkage_operator}
Let $\vecA := {\rm vec}(\mbf{A})\in\mathbb{R}^{m^2}$ be the vectorization of a matrix $\mbf{A}\in \mathbb{R}^{m\times m}$ and let $\mathcal{T}_{\alpha}$ the shrinkage operator defined as in eq. (\ref{eq:shrinkage_op}) by replacing $\frac{\lambda}{L}$ with a generic parameter $\alpha>0$. The following properties hold:
\begin{enumerate}[(a)]
    \item if $\mbf{A}$ is symmetric, then $\mathcal{T}_\alpha(\vecA)$ is the vectorization of a symmetric matrix;
    \item if $\mbf{A}$ is antisymmetric, then $\mathcal{T}_\alpha(\vecA)$ is the vectorization of an antisymmetric matrix.
\end{enumerate}
\end{proposition}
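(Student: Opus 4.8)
The plan is to reduce everything to the scalar level, exploiting the fact that $\mathcal{T}_\alpha$ acts componentwise. First I would observe that $\text{vec}(\cdot)$ is merely a bijective relabelling of the $m^2$ matrix entries: the $(j,k)$ entry of $\mbf{A}$ occupies position $(k-1)m+j$ in $\vecA$. Since the shrinkage operator is applied entry-by-entry, unvectorizing $\mathcal{T}_\alpha(\vecA)$ recovers a matrix $\mbf{B}$ with entries $B_{j,k} = g_\alpha(A_{j,k})$, where $g_\alpha:\mathbb{R}\to\mathbb{R}$, $g_\alpha(z) = (|z|-\alpha)^+\,\text{sign}(z)$, denotes the scalar soft-thresholding function underlying Eq.~(\ref{eq:shrinkage_op}). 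Both claims thereby become statements about how $g_\alpha$ interacts with the relation $A_{j,k}=\pm A_{k,j}$.

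Next I would record the single property of $g_\alpha$ that does all the work: it is an odd function, $g_\alpha(-z)=-g_\alpha(z)$, which is immediate from the three-case definition in Eq.~(\ref{eq:shrinkage_op}); in particular $g_\alpha(0)=0$. For part~(a), symmetry gives $A_{j,k}=A_{k,j}$, so applying the (single-valued) function $g_\alpha$ to both sides yields $B_{j,k}=g_\alpha(A_{j,k})=g_\alpha(A_{k,j})=B_{k,j}$; hence $\mbf{B}$ is symmetric and $\mathcal{T}_\alpha(\vecA)=\text{vec}(\mbf{B})$ is the vectorization of a symmetric matrix. For part~(b), antisymmetry gives $A_{j,k}=-A_{k,j}$, and combining this with the oddness of $g_\alpha$ gives $B_{j,k}=g_\alpha(A_{j,k})=g_\alpha(-A_{k,j})=-g_\alpha(A_{k,j})=-B_{k,j}$; the diagonal entries $A_{j,j}=0$ map to $g_\alpha(0)=0$, so $\mbf{B}$ is antisymmetric as required.

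I do not expect a serious obstacle here, as the argument is essentially a bookkeeping verification. The one point deserving explicit care is the index correspondence certifying that ``componentwise on $\vecA$'' coincides with ``entrywise on $\mbf{A}$'', so that no cross-talk between distinct entries can disturb the symmetry pattern; the separability of $\mathcal{T}_\alpha$ across coordinates is exactly what guarantees this. With that in hand, part~(a) needs only that $g_\alpha$ is a function and part~(b) needs only that it is odd, and the proof concludes.
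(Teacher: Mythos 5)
Your proof is correct and follows essentially the same route as the paper's: both reduce the claim to the index correspondence between $A_{j,k}$ and position $(k-1)m+j$ of the vectorization and then exploit that the componentwise soft-thresholding map $z \mapsto (|z|-\alpha)^{+}\,\mathrm{sign}(z)$ preserves equality of entries and, being odd, preserves sign-flipped equality. The only cosmetic difference is that you isolate oddness of the scalar function as the key property, whereas the paper verifies the same identity by direct computation with $|-z|=|z|$ and $\mathrm{sign}(-z)=-\mathrm{sign}(z)$.
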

\proof{Since $\mbf{A}_{ij} = \mathcal{A}_{m(j-1)+i}$  for all $i,j=1, \dots, m$, it can be shown that 
\[\mbf{A} \text{ is symmetric} \Longleftrightarrow \vecA[m(j-1)+i]=\vecA[m(i-1)+j], \, \text{for all} \, i,j = 1,\dots, m,\]
and 
\[\mbf{A} \text{ is antisymmetric} \Longleftrightarrow \vecA[m(j-1)+i]=-\vecA[m(i-1)+j],\, \text{for all}\, i,j = 1,\dots, m.\]
Hence, the thesis follows by the definition of $\mathcal{T}_\alpha$ observing that if $\mbf{A}$ is symmetric then 
\begin{displaymath}
\begin{split}
\mathcal{T}_\alpha(\vecA)_{m(j-1)+i} & = \left(|\mathcal{A}_{m(j-1)+i}| - \alpha \right)^+ \text{sign}(\mathcal{A}_{m(j-1)+i}) \\
& = \left(|\mathcal{A}_{m(i-1)+j}| - \alpha \right)^+ \text{sign}(\mathcal{A}_{m(i-1)+j}) = \mathcal{T}_\alpha(\vecA)_{m(i-1)+j}. 
\end{split}
\end{displaymath}
Similarly, if $\mbf{A}$ is antisymmetric, then 
\begin{displaymath}
\begin{split}
\mathcal{T}_\alpha(\vecA)_{m(j-1)+i} & = \left(|\mathcal{A}_{m(j-1)+i}| - \alpha \right)^+ \text{sign}(\mathcal{A}_{m(j-1)+i}) \\
& = \left(|-\mathcal{A}_{m(i-1)+j}| - \alpha \right)^+ \text{sign}(-\mathcal{A}_{m(i-1)+j}) = -\mathcal{T}_\alpha(\vecA)_{m(i-1)+j}. 
\end{split}
\end{displaymath}
\qed
}
\begin{theorem}
Let $\left\{    \begin{pmatrix} \mbf{s}_{k,1} \\ \mbf{s}_{k,2} \end{pmatrix}\right\}_{k\geq0}$ and $\left\{ \begin{pmatrix} \mbf{w}_{k,1} \\ \mbf{w}_{k,2} \end{pmatrix}\right\}_{k\geq0}$ the sequences generated with Algorithm \ref{alg:fista_one_step} with input the cross--power spectrum of the observed data $\mbf{S}^{\mbf{y}} \in \mathbb{C}^{m\times m}$ and a random Hermitian matrix $\mbf{S}_0 \in \mathbb{C}^{n \times n}$ as initial point. Then $\left\{\mbf{s}_{k,1}\right\}_{k\geq0}$ and $\left\{\mbf{w}_{k,1}\right\}_{k\geq0}$ are vectorization of symmetric matrices and $\left\{\mbf{s}_{k,2}\right\}_{k\geq0}$ and $\left\{\mbf{w}_{k,2}\right\}_{k\geq0}$ are vectorization of antisymmetric matrices.
\end{theorem}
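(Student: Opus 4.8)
The plan is to argue by induction on the iteration index $k$, establishing simultaneously the four claims of the statement. It is convenient to introduce the two linear subspaces $\mathcal{V}_S, \mathcal{V}_A$ of $\mathbb{R}^{n^2}$ (and their analogues in $\mathbb{R}^{m^2}$) consisting of the vectorizations of symmetric and of antisymmetric matrices, respectively. The whole proof then reduces to verifying that every operation performed in lines 4 and 6 of Algorithm \ref{alg:fista_one_step} maps $\mathcal{V}_S$ into $\mathcal{V}_S$ and $\mathcal{V}_A$ into $\mathcal{V}_A$, so that the structural property propagates from one iterate to the next. For the base case $k=0$, the initialization gives $\mathbf{s}_{0,1}=\mathbf{w}_{0,1}=\re(\vecS{}[0])$ and $\mathbf{s}_{0,2}=\mathbf{w}_{0,2}=\im(\vecS{}[0])$; since $\mathbf{S}_0$ is Hermitian, its real part is symmetric and its imaginary part is antisymmetric, so $\mathbf{s}_{0,1},\mathbf{w}_{0,1}\in\mathcal{V}_S$ and $\mathbf{s}_{0,2},\mathbf{w}_{0,2}\in\mathcal{V}_A$.

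The crux of the argument, and the step I expect to require the most care, is a lemma on the forward operator: both $\mathbf{G}\otimes\mathbf{G}$ and its transpose preserve symmetry and antisymmetry after vectorization. This follows from the identity $\text{vec}(\mathbf{A}\mathbf{M}\mathbf{B})=(\mathbf{B}^\top\otimes\mathbf{A})\,\text{vec}(\mathbf{M})$, which yields $(\mathbf{G}\otimes\mathbf{G})\,\text{vec}(\mathbf{M})=\text{vec}(\mathbf{G}\mathbf{M}\mathbf{G}^\top)$ and, using $(\mathbf{G}\otimes\mathbf{G})^\top=\mathbf{G}^\top\otimes\mathbf{G}^\top$, the relation $(\mathbf{G}\otimes\mathbf{G})^\top\,\text{vec}(\mathbf{N})=\text{vec}(\mathbf{G}^\top\mathbf{N}\mathbf{G})$. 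Because $\mathbf{G}\mathbf{M}\mathbf{G}^\top$ is symmetric (resp. antisymmetric) whenever $\mathbf{M}$ is, and likewise for $\mathbf{G}^\top\mathbf{N}\mathbf{G}$, both operators — and hence the composition $(\mathbf{G}\otimes\mathbf{G})^\top(\mathbf{G}\otimes\mathbf{G})$ appearing in line 4 — respect $\mathcal{V}_S$ and $\mathcal{V}_A$ (mapping the $\mathbb{R}^{m^2}$ versions to the $\mathbb{R}^{n^2}$ ones and back as appropriate). Once this algebraic fact is in place, the remaining steps are routine.

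For the inductive step, assume the claim at $k-1$. The argument of $\mathcal{T}_{\lambda/L}$ in the first block of line 4 is $\mathbf{w}_{k-1,1}-\frac{2}{L}(\mathbf{G}\otimes\mathbf{G})^\top\big((\mathbf{G}\otimes\mathbf{G})\mathbf{w}_{k-1,1}-\re(\vecS{y})\big)$. Here $\mathbf{w}_{k-1,1}\in\mathcal{V}_S$ by hypothesis, $\re(\vecS{y})\in\mathcal{V}_S$ because $\mathbf{S}^{\mathbf{y}}$ is Hermitian, and the lemma together with the closedness of $\mathcal{V}_S$ under linear combinations places the whole expression in $\mathcal{V}_S$; Proposition \ref{prop:shrinkage_operator}(a) then gives $\mathbf{s}_{k,1}\in\mathcal{V}_S$. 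The identical reasoning applied to the second block, now using $\im(\vecS{y})\in\mathcal{V}_A$ and Proposition \ref{prop:shrinkage_operator}(b), yields $\mathbf{s}_{k,2}\in\mathcal{V}_A$. Finally, line 6 defines $\mathbf{w}_{k,\cdot}$ as a real linear combination of $\mathbf{s}_{k,\cdot}$ and $\mathbf{s}_{k-1,\cdot}$, so the closedness of $\mathcal{V}_S$ and $\mathcal{V}_A$ under linear combinations immediately gives $\mathbf{w}_{k,1}\in\mathcal{V}_S$ and $\mathbf{w}_{k,2}\in\mathcal{V}_A$, closing the induction and proving the statement for all $k\geq0$.
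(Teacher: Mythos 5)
Your proof is correct and follows essentially the same route as the paper's: induction on $k$, with the base case settled by the Hermitian initialization, the inductive step handled via the identity $(\mathbf{G}\otimes\mathbf{G})\,\mathrm{vec}(\mathbf{M})=\mathrm{vec}(\mathbf{G}\mathbf{M}\mathbf{G}^\top)$ (and its transpose analogue) combined with Proposition \ref{prop:shrinkage_operator}, and line 6 dispatched as a linear combination. The only cosmetic difference is that you isolate the symmetry-preservation of the forward operator as an explicit lemma, whereas the paper carries out the same computation inline.
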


\begin{proof}
To prove the theorem we proceed by induction. 

For $k=0$,  $
\begin{pmatrix} \mbf{w}_{0,1} \\ \mbf{w}_{0,2} \end{pmatrix} = \begin{pmatrix} \mbf{s}_{0,1} \\ \mbf{s}_{0,2} \end{pmatrix} = \begin{pmatrix} \re(\vecS{}[0]) \\ \im(\vecS{}[0]) \end{pmatrix}
$. Hence the thesis follows from the fact that $\mbf{S}_0$ is Hermitian.

Fixed $k>0$, we assume that $\mbf{s}_{k-1,1}$ and $\mbf{w}_{k-1,1}$ are vectorization of two symmetric matrices denoted with $\mbf{S}_{k-1,1}$ and $\mbf{W}_{k-1,1}$, respectively. Analogously, we assume that  $\mbf{s}_{k-1,2}$ and $\mbf{w}_{k-1,2}$ are vectorization of two antisymmetric matrices denoted with $\mbf{S}_{k-1,2}$ and $\mbf{W}_{k-1,2}$, respectively.

By exploiting the linearity of the $\text{vec}(\cdot)$ operator and the properties of the Kronecker product, from line 4 of Algorithm \ref{alg:fista_one_step} it follows:

\begin{equation}\label{eq:induction_1}
\begin{split}
    \mbf{s}_{k,1} & = \mathcal{T}_{\frac{\lambda}{L}} \left( \mbf{w}_{k-1,1}-\frac{2}{L} \left(\mbf{G}\otimes\mbf{G})^\top((\mbf{G}\otimes\mbf{G}) \mbf{w}_{k-1,1}-\re(\vecS{y})\right) \right) \\
    & = \mathcal{T}_{\frac{\lambda}{L}} \left( \text{vec}(\mbf{W}_{k-1,1})-\frac{2}{L} \left(\mbf{G}\otimes\mbf{G})^\top \text{vec}\left(\mbf{G} \mbf{W}_{k-1,1} \mbf{G}^\top-\re(\mbf{S}^{\mbf{y}}\right)\right) \right)\\
     & = \mathcal{T}_{\frac{\lambda}{L}} \left( \text{vec}\left(\mbf{W}_{k-1,1}-\frac{2}{L} \mbf{G}^\top \mbf{G} \mbf{W}_{k-1,1} \mbf{G}^\top \mbf{G} +\frac{2}{L} \mbf{G}^\top\re(\mbf{S}^{\mbf{y}})\mbf{G} \right)\right).
\end{split}
\end{equation}
Since $\mbf{W}_{k-1,1}$ and $\re(\mbf{S}^{\mbf{y}})$ are symmetric for the inductive hypothesis and the properties of the cross-power spectrum, respectively, the argument on the right side of the last equation in (\ref{eq:induction_1}) results to be a symmetric matrix. Hence Proposition \ref{prop:shrinkage_operator} ensures $ \mbf{s}_{k,1}$ is the vectorization of a symmetric matrix, $\mbf{S}_{k,1}$.

Similarly from line 4 of Algorithm \ref{alg:fista_one_step} it follows:

\begin{equation}\label{eq:induction_2}
\begin{split}
    \mbf{s}_{k,2} & = \mathcal{T}_{\frac{\lambda}{L}} \left( \mbf{w}_{k-1,2}-\frac{2}{L} (\mbf{G}\otimes\mbf{G})^\top((\mbf{G}\otimes\mbf{G}) \mbf{w}_{k-1,2}-\im(\vecS{y})) \right) \\
     & = \mathcal{T}_{\frac{\lambda}{L}} \left( \text{vec}\left(\mbf{W}_{k-1,2}-\frac{2}{L} \mbf{G}^\top \mbf{G} \mbf{W}_{k-1,2} \mbf{G}^\top \mbf{G} +\frac{2}{L} \mbf{G}^\top\im(\mbf{S}^{\mbf{y}})\mbf{G} \right)\right)
\end{split}
\end{equation}
where the argument of the right side of the last equation is antisymmetric because $\mbf{W}_{k-1,2}$ and $\im(\mbf{S}^{\mbf{y}})$ are antisymmetric. Hence Proposition \ref{prop:shrinkage_operator} ensures $ \mbf{s}_{k,2}$ is the vectorization of a antisymmetric matrix, $\mbf{S}_{k,2}$. 

The fact that $\mbf{w}_{k, 1}$ and $\mbf{w}_{k, 2}$ are the vectorization of a symmetric and an antisymmetric matrix then follows from line 6 of Algorithm \ref{alg:fista_one_step}, that can be rewritten as follows

\begin{equation*}
\left( \begin{array}{c}
\mbf{w}_{k,1} \\
\mbf{w}_{k,2} \end{array} \right) = \left( \begin{array}{c}
\text{vec} \left( \mbf{S}_{k,1}  + \frac{t_{k-1}-1}{t_k} (\mbf{S}_{k,1} -  \mbf{S}_{k-1,1}) \right) \\
\text{vec} \left( \mbf{S}_{k,2}  + \frac{t_{k-1}-1}{t_k} (\mbf{S}_{k,2} -  \mbf{S}_{k-1,2}) \right) \end{array} \right).
\end{equation*}

\end{proof}
\qed

\section{Smart product for an efficient computation of the FISTA update}
\label{sec:smart_computation}

The FISTA update at line 4 of Algorithm \ref{alg:fista_one_step} requires several matrix--vector multiplications involving the matrix $\mbf{G} \otimes \mbf{G}$ and its transpose  $(\mbf{G} \otimes \mbf{G})^\top 
= \mbf{G}^\top \otimes \mbf{G}^\top$. Since $\mbf{G} \in \mathbb{R}^{m \times n}$, $\mbf{G} \otimes \mbf{G}$ and $\mbf{G}^\top \otimes \mbf{G}^\top$ have size $m^2 \times n^2$ and $n^2 \times m^2$, respectively. Hence, the product of a vector by each one of these matrices has a cost proportional to $O(m^2n^2)$. In this section we show how the properties of the Kronecker product can be exploited to reduce both the computational cost and the memory requirements of such product. More specifically, our approach has two main advantages: on the one hand it reduces the computational cost to a value proportional to $O(\max(m,n)\, mn)$, on the other hand it avoids explicitly 
 assembling the matrix $\mbf{G} \otimes \mbf{G}$, by directly employing the matrix $\mbf{G}$.

First, we recall that the Kronecker product enjoys the  mixed-product property \cite{MR832183},  therefore it holds
\begin{equation}
\label{eq:split_tensor}
   \underset{m^2\times n^2 }{\underbrace {\mbf{G}\otimes \mbf{G}}}=
   (\mbf{G} \,\mbf{I}_{n}) \otimes (\mbf{I}_{m} \, \mbf{G}) = 
    \underset{m^2\times nm }{\underbrace { (\mbf{G}\otimes \mbf{I}_{m}) }}
  \, 
   \underset{nm\times n^2 }{\underbrace {  (\mbf{I}_{n}\otimes  \mbf{G})  }}.
\end{equation}

Now we present a result that establishes a connection between the elements of $\mbf{G} \otimes \mbf{I}_{m}$ and those of $\mbf{I}_{m} \otimes \mbf{G}$ through matrices that permute rows and columns. This relationship  plays a crucial role in the procedure for efficiently computing the matrix-vector product with $(\mbf{G} \otimes \mbf{G})$ (or its transpose, $(\mbf{G} \otimes \mbf{G})^\top$).

\begin{proposition}
    \label{prop:permutation}
   Let  $\mbf{G} \in \mathbb{R}^{m \times n}$ be an a rectangular matrix. Then there exist two matrices $\mbf{P}_{m^2}\in \mathbb{R}^{m^2 \times m^2} $ and $\mbf{P}_{nm}\in \mathbb{R}^{nm \times nm}$ such that
    \begin{equation}
        \label{eq:perm}
        \mbf{G} \otimes \mbf{I}_{m} = \mbf{P}_{m^2} ( \mbf{I}_{m}\otimes \mbf{G})\mbf{P}_{nm}.
    \end{equation}

\end{proposition}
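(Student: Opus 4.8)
The plan is to realise both $\mbf{P}_{m^2}$ and $\mbf{P}_{nm}$ as \emph{commutation} (perfect--shuffle) permutation matrices and to deduce \eqref{eq:perm} from the standard interaction between such matrices and Kronecker products. Recall that for a pair of positive integers $p,q$ the commutation matrix $\mbf{K}_{p,q} \in \R^{pq \times pq}$ is the unique permutation matrix characterised by $\mbf{K}_{p,q}\,\mathrm{vec}(\mbf{M}) = \mathrm{vec}(\mbf{M}^\top)$ for every $\mbf{M} \in \R^{p \times q}$. Being a permutation matrix it is orthogonal, and one has $\mbf{K}_{p,q}^{-1} = \mbf{K}_{p,q}^\top = \mbf{K}_{q,p}$; in particular $\mbf{K}_{m,m}$ is a symmetric involution.

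First I would establish the key commutation identity: for $\mbf{A} \in \R^{m \times n}$ and $\mbf{B} \in \R^{p \times q}$,
\begin{equation}\label{eq:comm_key}
\mbf{K}_{m,p}(\mbf{B} \otimes \mbf{A}) = (\mbf{A} \otimes \mbf{B})\mbf{K}_{n,q}.
\end{equation}
This follows by applying both sides to an arbitrary $\mathrm{vec}(\mbf{X})$ with $\mbf{X} \in \R^{n \times q}$ and using the two elementary facts $\mathrm{vec}(\mbf{A}\mbf{X}\mbf{B}^\top) = (\mbf{B}\otimes\mbf{A})\mathrm{vec}(\mbf{X})$ and $(\mbf{A}\mbf{X}\mbf{B}^\top)^\top = \mbf{B}\mbf{X}^\top\mbf{A}^\top$, together with the defining transposition property of the commutation matrices; since the resulting equality holds for all $\mbf{X}$, the matrix identity \eqref{eq:comm_key} follows.

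Then I would specialise \eqref{eq:comm_key} to $\mbf{A} = \mbf{G}$ (of size $m \times n$) and $\mbf{B} = \mbf{I}_{m}$ (so $p = q = m$), which gives $\mbf{K}_{m,m}(\mbf{I}_{m} \otimes \mbf{G}) = (\mbf{G} \otimes \mbf{I}_{m})\mbf{K}_{n,m}$. Multiplying on the right by $\mbf{K}_{n,m}^{-1} = \mbf{K}_{m,n}$ and using that $\mbf{K}_{m,m}$ is its own inverse yields exactly \eqref{eq:perm} with the identifications $\mbf{P}_{m^2} = \mbf{K}_{m,m} \in \R^{m^2 \times m^2}$ and $\mbf{P}_{nm} = \mbf{K}_{m,n} \in \R^{nm \times nm}$, both of which are permutation matrices as required.

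The only real obstacle is the dimension bookkeeping: one must track carefully which commutation matrix (and with which ordered pair of indices) appears on each side, since exchanging $\mbf{K}_{n,m}$ for $\mbf{K}_{m,n}$ is precisely the step that inverts the permutation, and a transpose slip there would produce a matrix of the wrong size. An alternative, entirely elementary route avoids \eqref{eq:comm_key} altogether and instead verifies \eqref{eq:perm} entrywise by writing $(\mbf{G}\otimes\mbf{I}_{m})$ and $(\mbf{I}_{m}\otimes\mbf{G})$ blockwise and exhibiting $\mbf{P}_{m^2}$, $\mbf{P}_{nm}$ explicitly as the permutations that reorder the $m\times m$ blocks; this makes the result self-contained but turns the whole argument into index juggling, which is the part I would expect to be the most error-prone.
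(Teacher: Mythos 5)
Your proof is correct, but it takes a genuinely different route from the paper's. You derive \eqref{eq:perm} from the general commutation-matrix identity $\mbf{K}_{m,p}(\mbf{B}\otimes\mbf{A}) = (\mbf{A}\otimes\mbf{B})\mbf{K}_{n,q}$, specialised to $\mbf{B}=\mbf{I}_m$, which yields the clean identification $\mbf{P}_{m^2}=\mbf{K}_{m,m}$ and $\mbf{P}_{nm}=\mbf{K}_{m,n}$; the dimension bookkeeping you flag as the delicate point does check out (and, incidentally, the involutivity of $\mbf{K}_{m,m}$ is not actually needed in your final step, since that factor already sits on the correct side of the identity). The paper instead follows exactly the ``entirely elementary route'' you mention as an alternative: it writes out the block structures of $\mbf{G}\otimes\mbf{I}_m$ and $\mbf{I}_m\otimes\mbf{G}$ and exhibits the row and column permutations explicitly through the index maps $\xi_r(i) = \mathrm{mod}(i-1,m)\,m + \lfloor (i-1)/m\rfloor + 1$ and $\xi_c(j) = \mathrm{mod}(j-1,n)\,m + \lfloor (j-1)/n\rfloor + 1$ of Equation \eqref{eq:function_perm}. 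Your argument is shorter and less error-prone because it delegates the index juggling to a standard lemma about perfect shuffles; the paper's explicit construction pays off downstream, since the closed-form maps $\xi_r$ and $\xi_c$ are precisely what is used in Algorithm \ref{alg:smart_prod} and in the cost analysis of Theorem \ref{prop:smart_prod} to argue that the permutation steps cost nothing. Of course the two constructions agree: $\xi_r$ and $\xi_c$ are just the entrywise descriptions of the commutation matrices $\mbf{K}_{m,m}$ and $\mbf{K}_{m,n}$.
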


\begin{proof}
The thesis is equivalent to identify two permutation functions for the row and column indices that rearrange the elements of $\mbf{I}_{m} \otimes \mbf{G}$, making it identical to $\mbf{G} \otimes \mbf{I}_{m}$. 
Relation (\ref{eq:perm}) is then established by defining $\mbf{P}_{m^2} \in \mathbb{R}^{m^2 \times m^2}$ and $\mbf{P}_{nm} \in \mathbb{R}^{nm \times nm}$ as identity matrices of the corresponding sizes, with the specified permutations applied to their rows and columns \cite{MR832183}. 

Then we first highlight that the  structures of  $\mbf{G}\otimes \mbf{I}_{m} $ and $\mbf{I}_{m} \otimes \mbf{G} $ are given by
\begin{equation*}
 \mbf{G}\otimes \mbf{I}_{m} =  \underset{nm  \text{ columns } }{ \underbrace{\begin{bmatrix}
g_{1,1}\mbf{I}_{m}  & g_{1,2}\mbf{I}_{m} & \cdots & g_{1,n}\mbf{I}_{m} \\
g_{2,1}\mbf{I}_{m} & g_{2,2}\mbf{I}_{m}  & \cdots & g_{2,n}\mbf{I}_{m} \\
\vdots & \vdots & \ddots & \vdots \\
g_{m,1}\mbf{I}_{m} & g_{m,2}\mbf{I}_{m}  & \cdots & g_{m,n}\mbf{I}_{m}  \\
\end{bmatrix}}},
\quad  \mbf{I}_{m}  \otimes \mbf{G} = \left. \begin{bmatrix}
\mbf{G} & O_{m,n} & \cdots &  O_{m,n} \\
 O_{m,n} & \mbf{G} & \cdots &  O_{m,n} \\
\vdots & \vdots & \ddots & \vdots \\
 O_{m,n} &  O_{m,n} & \cdots & \mbf{G}\\
\end{bmatrix}.\right\}m^2  \text{ rows }
    \end{equation*}
Consequently, we obtain the elements of $\mbf{G}\otimes \mbf{I}_{m}$ from those of $ \mbf{I}_{m} \otimes \mbf{G} $ with a permutation that brings the $i-$th row in position 
\begin{equation}
\label{eq:perm_rows}
    {\rm mod}(i-1,m) \, m+  \left\lfloor  \frac{i-1}{m} \right\rfloor +1, \quad {\rm for }\, i = 1\dots m^2;
\end{equation}
and the $j-$th column in position 
\begin{equation}
\label{eq:perm_colo}
    {\rm mod}(j-1,n) \, m+ \left\lfloor  \frac{j-1}{n} \right\rfloor +1,  \quad {\rm for }\, j = 1\dots nm,
\end{equation}
where, for all $a, b \in \mathbb{N}$ with ${\rm mod}(a,b)$ we denote the reminder after the division of $a$ by $b$ and with $\left\lfloor  a \right\rfloor$ the floor of $a$.

Indeed, note that, ${\rm mod}(i-1, m)$ (resp. ${\rm mod}(j-1, n)$ ) is the quantity that serves to group the row indices (resp. coloumn) into blocks of $m$ rows (resp. coloumn), determining the relative position within each block. The value $\left\lfloor \frac{i-1}{m} \right\rfloor$ (resp.  $\left\lfloor \frac{j-1}{n} \right\rfloor$ ) determines the block number we are considering.  See \cite[Section 2.5]{MR4449208} and \cite[Remarks 2-3]{MR4623368} for further visualizations and definitions on the block rectangular permutations.

Let $\mbf{e}_\ell$ be the $\ell-$th canonical vector with 1 in position $\ell$ and zeros elsewhere, that is $(\mbf{e}_\ell)_j = \delta_{j \ell}$. We conclude the proof choosing $ \mbf{P}_{m^2}$ and $ \mbf{P}_{nm}$ the two permutation matrices where the 
one entry equal to 1 is given exploiting the two permutation functions (\ref{eq:perm_rows}) and (\ref{eq:perm_colo}). In details,
\begin{equation*}
     \mbf{P}_{m^2} =\begin{bNiceMatrix}[margin]
    e_{\xi_r(1)} & | & e_{\xi_r(2)}  & | & \cdots & | & e_{\xi_r(
    m^2)}  \\
\end{bNiceMatrix}, \quad 
 \mbf{P}_{nm}=\begin{bNiceMatrix}[margin]
    e_{\xi_c(1)} & | & e_{\xi_c(2)}  & | & \cdots & | & e_{\xi_c(nm)}  \\
\end{bNiceMatrix},
\end{equation*}
where 
\begin{equation}
\label{eq:function_perm}
\begin{split}
    \xi_r(i)&=  {\rm mod}(i-1,m) \, m+  \left\lfloor  \frac{i-1}{m} \right\rfloor +1, \quad i = 1\dots m^2, \\
    \xi_c(j)&=  {\rm mod}(j-1,n) \, m+  \left\lfloor  \frac{j-1}{n} \right\rfloor +1, \quad j = 1\dots nm.
    \end{split}
\end{equation}

   \end{proof}   
\qed

\begin{theorem}
    \label{prop:smart_prod}
     Let  $\mbf{G} \in \mathbb{R}^{m \times n}$ be a rectangular matrix and $\mbf{x}\in \R^{n^2}$. The matrix--vector product $(\mbf{G}\otimes \mbf{G})\mbf{x}$ can be performed in $O(\max(m,n) \, mn)$ operations with a matrix--less approach (denoted by Algorithm \ref{alg:smart_prod}).
\end{theorem}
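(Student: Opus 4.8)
The plan is to avoid ever assembling the $m^2 \times n^2$ matrix $\mbf{G} \otimes \mbf{G}$, and instead to evaluate the product by repeatedly applying the small factor $\mbf{G}$ to suitably reshaped subvectors, relying on the factorization in Eq. (\ref{eq:split_tensor}) together with the permutation identity of Proposition \ref{prop:permutation}. First I would use the mixed--product property (\ref{eq:split_tensor}) to split the computation of $\mbf{y} := (\mbf{G} \otimes \mbf{G})\mbf{x}$ into two nested matrix--vector products,
\[
\mbf{y} = (\mbf{G} \otimes \mbf{I}_m)\,\mbf{z}, \qquad \mbf{z} := (\mbf{I}_n \otimes \mbf{G})\,\mbf{x} \in \R^{nm},
\]
so that each stage involves only a factor whose nonzero part is built from copies of $\mbf{G}$.

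For the inner product $\mbf{z} = (\mbf{I}_n \otimes \mbf{G})\mbf{x}$ I would exploit the block-diagonal structure of $\mbf{I}_n \otimes \mbf{G}$: partitioning $\mbf{x} \in \R^{n^2}$ into $n$ consecutive subvectors of length $n$, the $k$-th block of $\mbf{z}$ is simply $\mbf{G}$ applied to the $k$-th block of $\mbf{x}$. This amounts to $n$ matrix--vector products, each costing $O(mn)$, for a total of $O(mn^2)$, and it requires only $\mbf{G}$ to be kept in memory. For the outer product $\mbf{y} = (\mbf{G} \otimes \mbf{I}_m)\mbf{z}$ the factor is no longer block diagonal, but Proposition \ref{prop:permutation} lets me rewrite it as $\mbf{G} \otimes \mbf{I}_m = \mbf{P}_{m^2}(\mbf{I}_m \otimes \mbf{G})\mbf{P}_{nm}$. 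Hence I would (i) permute $\mbf{z}$ by $\mbf{P}_{nm}$ using the index map $\xi_c$ of Eq. (\ref{eq:function_perm}), (ii) apply the block-diagonal $\mbf{I}_m \otimes \mbf{G}$ exactly as before, i.e. $m$ blockwise products of $\mbf{G}$ with subvectors of length $n$ at cost $O(m^2 n)$, and (iii) permute the result by $\mbf{P}_{m^2}$ using $\xi_r$. Since the two permutations only relabel entries, they cost $O(nm)$ and $O(m^2)$ respectively and do not affect the leading order.

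Combining the two stages, the total cost is $O(mn^2 + m^2 n) = O\big(mn(m+n)\big) \le O\big(\max(m,n)\,mn\big)$, which is the claimed bound; moreover only $\mbf{G}$ and the two index maps are ever stored, so the scheme is matrix--less as asserted. The main obstacle I anticipate is not the complexity count, which is routine once the structure is fixed, but the correctness bookkeeping: I must verify that the reshaping and partitioning in each stage, together with the explicit permutation indices $\xi_r,\xi_c$, reproduce exactly the entries of $(\mbf{G} \otimes \mbf{G})\mbf{x}$. Concretely this reduces to checking that the composition of the two applications with the permutations realizes the identity (\ref{eq:perm}) on $\mbf{z}$, a verification most transparently carried out by identifying the whole procedure with the vec--trick $(\mbf{G} \otimes \mbf{G})\,\text{vec}(\mbf{X}) = \text{vec}(\mbf{G}\,\mbf{X}\,\mbf{G}^\top)$, where $\mbf{X} \in \R^{n \times n}$ is the reshaping of $\mbf{x}$; this is the same structure already exploited in the argument surrounding Eq. (\ref{eq:induction_1}).
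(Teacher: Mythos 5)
Your proposal is correct and follows essentially the same route as the paper: the factorization $(\mbf{G}\otimes\mbf{G}) = (\mbf{G}\otimes\mbf{I}_m)(\mbf{I}_n\otimes\mbf{G})$ from Eq.~(\ref{eq:split_tensor}), the permutation identity of Proposition~\ref{prop:permutation} to reduce the non-block-diagonal factor to $\mbf{P}_{m^2}(\mbf{I}_m\otimes\mbf{G})\mbf{P}_{nm}$, and the count of $n$ plus $m$ applications of $\mbf{G}$ giving $O(\max(m,n)\,mn)$. The only cosmetic difference is that you charge $O(nm)$ and $O(m^2)$ for the permutations while the paper treats them as free index relabelings; either way they are dominated by the leading term.
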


\begin{proof}
Equation (\ref{eq:split_tensor}) and Proposition \ref{prop:permutation} imply
\begin{displaymath}
(\mbf{G}\otimes \mbf{G})\mbf{x} = (\mbf{G} \otimes \mbf{I}_{m})(\mbf{I}_{n}\otimes  \mbf{G}) \mbf{x} = \mbf{P}_{m^2} ( \mbf{I}_{m}\otimes \mbf{G})\mbf{P}_{nm} (\mbf{I}_{n}\otimes  \mbf{G}) \mbf{x} \,,
\end{displaymath}
where $ \mbf{P}_{m^2}$ and $\mbf{P}_{nm}$ are the permutation matrices defined by functions $\xi_r$ and $\xi_c$ in Equation (\ref{eq:function_perm}) in  Proposition \ref{prop:permutation}.
Hence the product $(\mbf{G}\otimes \mbf{G})\mbf{x}$ can be decomposed in few steps that form Algorithm \ref{alg:smart_prod}.
We now analyze the proposed procedure and demonstrate that it does not require assembling any of the matrix tensor products. Moreover, we prove that the total computational cost of each step is proportional to $\max(m, n) \cdot mn$.

The cost of step 2 and 4 of Algorithm \ref{alg:smart_prod} is 0  since they only require  the index exchange suggested by  $\xi_r$ and $\xi_c$ in Equation (\ref{eq:function_perm}).
The vector $\mbf{x}\in \R^{n^2}$ can be split into $n$ vectors $\hat{x}_\ell$ of size $n$. In particular,\begin{equation*}
 \hat{x}_\ell= \begin{bmatrix}
     x_{(\ell-1)n+k}
 \end{bmatrix}_{k=1}^{n}, \quad \ell=1,\dots,n.
 \end{equation*}
Similarly, the vector $\widehat{\mbf{y}} $  can be split into $m$ vectors of size $n$.

The latter and the block diagonal structures of $\mbf{I}_n\otimes \mbf{G}$ and $\mbf{I}_m\otimes \mbf{G}$, imply that the multiplications in steps 1 and 3  of Algorithm \ref{alg:smart_prod}  avoid assembling the matrix $\mbf{G} \otimes \mbf{G}$ (or any other matrix tensor product). Moreover, they consist of $n$ and $m$ matrix-vector products with the matrix $\mbf{G}$, respectively. Therefore the total computational cost is proportional to $n\cdot nm+m\cdot n m$, that is $O(\max(n,m)\cdot mn)$.
\qed
\end{proof}

\begin{algorithm}
\caption{Efficient computation of the matrix--vector multiplication \\ $(\mbf{G}\otimes \mbf{G})\mbf{x}$}\label{alg:smart_prod}
\renewcommand{\algorithmicrequire}{\textbf{Input:}}
\renewcommand{\algorithmicensure}{\textbf{Output:}}
\begin{algorithmic}[1]
\REQUIRE $\mbf{G} \in \mathbb{R}^{m \times n}$; $\mbf{x}\in \R^{n^2}$; $\mbf{P}_{nm}$ and $\mbf{P}_{m^2}$ as in Proposition \ref{prop:permutation}.
\STATE{Compute $\mbf{y} := \left( \mbf{I}_n \otimes \mbf{G} \right) \mbf{x} \in \mathbb{R}^{nm}$;}
\STATE{Permute the element of $\mbf{y}$ to define $\widehat{\mbf{y}} := \mbf{P}_{nm} \mbf{y}$;}
\STATE{Compute $\mbf{z} := \left( \mbf{I}_m \otimes \mbf{G} \right) \widehat{\mbf{y}} \in \mathbb{R}^{m^2}$;}
\STATE{Permute the element of $\mbf{z}$ to define $\widehat{\mbf{z}} := \mbf{P}_{m^2} \mbf{z}$;}
\ENSURE{$\widehat{\mbf{z}}$}
\end{algorithmic}
\end{algorithm}

\begin{remark}
Since the results presented in this section hold true for any generic rectangular matrix, Algorithm \ref{alg:smart_prod} can be straightforwardly applied for computing also the matrix--vector product $\left(\mbf{G}^\top \otimes \mbf{G}^\top \right) \mbf{x}.$
Moreover,  in connectivity studies, $m \propto 10^2$ is the number of M/EEG sensors and $n \propto 10^3$ is the number of cortical locations and in general   $n<<m$. Then, the estimation cost of the Theorem \ref{prop:smart_prod} becomes $O( m^2n)$. 
\end{remark}

\section{Numerical validation}
\label{sec:num}

The numerical Section is organized as follows. In Subsection \ref{ssec:num_configuration} and \ref{ssec:simulations} we describe how MEG synthetic data were simulated an analyzed using both the classical two--step approach and the proposed one-step approach. The results and comparison of the two methods are summarized in Subsection \ref{ssec:results}.

\subsection{Brain connectivity configurations.}
\label{ssec:num_configuration}
To test the performance of the proposed approach under different experimental conditions, we considered the following two brain configurations.

\paragraph{Configuration 1: three active sources with unidirectional coupling from source 1 to source 2.}
Inspired by previous works \cite{sommariva2019,vallarino2023,haufe2019}, the 
time--courses of the active sources were simulated by filtering in the band $[ 8, 12]$Hz ($\alpha$ band) a multivariate autoregressive (MVAR)  process of order $P = 5$ defined as
\begin{equation}\label{eq:mvar_conf1}
    \begin{pmatrix}
        z_1(t) \\
        z_2(t) \\
        z_3(t)
    \end{pmatrix} = \sum_{k=1}^P
    \begin{pmatrix}
        a_{1,1}(k) & 0 & 0\\
        a_{2,1}(k) & a_{2,2}(k) & 0\\
        0 & 0 &a_{3,3}(k)
    \end{pmatrix}
        \begin{pmatrix}
        z_1(t-k) \\
        z_2(t-k) \\
        z_3(t-k)
    \end{pmatrix} + 
        \begin{pmatrix}
        \epsilon_1(t) \\
        \epsilon_2(t) \\
        \epsilon_3(t)
    \end{pmatrix} \, . 
\end{equation}
The non-zero elements $a_{i,j}(k)$ of the coefficient matrix were drawn from a normal distribution of zero mean and standard deviation $0.9$. We retained only coefficients resulting in $(i)$ a stable MVAR process \cite{lutkepohl2005new} and $(ii)$ triplets of signals, $(z_1(t), z_2(t), z_3(t))^T $,  such that the $\ell_2$ norm of the strongest one is less than 3 times the $\ell_2$ norm of the weakest one and such that the average over the range $[8, 12]$Hz of the sum of their power spectra was at least 1.2 times the average over the entire frequency range \cite{vallarino2023}.

\paragraph{Configuration 2: three sources with unidirectional coupling from source 1 to source 2 and 3.} The time--courses of the active sources were generating as for Configuration 1, but substituting the model in Eq. (\ref{eq:mvar_conf1}) with
    \begin{equation}\label{eq:mvar_conf2}
    \begin{pmatrix}
        z_1(t) \\
        z_2(t) \\
        z_3(t)
    \end{pmatrix} = \sum_{k=1}^P
    \begin{pmatrix}
        a_{1,1}(k) & 0 & 0\\
        a_{2,1}(k) & a_{2,2}(k) & 0\\
        a_{3,1}(k) & 0 &a_{3,3}(k)
    \end{pmatrix}
        \begin{pmatrix}
        z_1(t-k) \\
        z_2(t-k) \\
        z_3(t-k)
    \end{pmatrix} + 
        \begin{pmatrix}
        \epsilon_1(t) \\
        \epsilon_2(t) \\
        \epsilon_3(t)
    \end{pmatrix} \, .
\end{equation}

\subsection{Simulation and analysis of the observed MEG time--series.}\label{ssec:simulations}

We exploited the model in Eq. (\ref{eq:fwd}) for generating 50 realisations, $\{\mbf{y}(t)\}_{t=1}^T$, of the observable process for each configuration defined in the previous section. Specifically, we fixed $T=10,000$, and for all $t=1, \dots, T$, we set
\begin{equation}\label{eq:fwd_model_sim}
\mbf{y}(t) = \mbf{G} \mbf{x}(t) + \mbf{e}(t),
\end{equation}
where
\begin{itemize}
\item we extracted the forward operator $\mbf{G}$ from the sample dataset within the MNE Python package \cite{GramfortEtAl2013a} by considering only magnetometers, and by downsampling the available source space to $n=6940$ points. Hence, in our numerical experiment $\mbf{G}$ has size $102 \times 6940$ and each column $\mbf{g}_i$, $i=1, \dots, n$, represents the magnetic field generated by a point-like unit source placed at the $i$--th point of the source-space with orientation normal to the local cortical surface;
\item we defined $\mbf{x}(t)$ by randomly selecting three points of the source--space so that the pairwise source distances were greater than 4 cm and the pairwise ratios of the $\ell_2$ norms of the corresponding columns of $\mbf{G}$ were close to one. The components of $\mbf{x}(t)$ corresponding to the drawn location were set equal to the time--courses defined by equation (\ref{eq:mvar_conf1}) or (\ref{eq:mvar_conf2}) while the value of the remaining $n-3$ components was kept equal to 0;
\item we sampled $\mbf{e}(t)$ from a multivariate Gaussian distribution $\mathcal{N}(\mbf{0}, \sigma^2 \mbf{I}_m)$ where we chose $\sigma^2$ so as to obtain a signal-to-noise ratio (SNR) equal to 5 dB. 
\end{itemize}
To test the proposed method, for each one of the simulated data, $\{\mbf{y}(t)\}_{t=1}^T$, we then computed the Welch's estimator, $\mbf{S}^{\mbf{y}}(f)$, and we applied Algorithm \ref{alg:fista_one_step} by setting $f$ equal to the frequency in the range [8, 12]Hz where the component $S_{12}^{\mbf{y}}(f)$ peaks. In our experiments, we set the maximum number of iterations $K$ equal to 5,000, and the tolerance $\varepsilon$ equal to $10^{-5}$. To avoid inverse crime and mimic real-life scenarios where the active brain sources seldom match points of the source-space, the matrix $\mbf{G}$ used within Algorithm \ref{alg:fista_one_step} is obtained from that employed in Eq. (\ref{eq:fwd_model_sim}) for simulating the MEG data by further reducing the source-space to $n=644$ points. Finally, we tested four different values of the regularization parameters, by choosing four different scaling factor $\kappa$ evenly spaced in log-space in the range $[10^{-2}, 10^{-1}]$. Then we set $\lambda = \kappa \, \lambda^*$, where $\lambda^*= 2\, \left\|\boldsymbol{\mathcal{G}} 
\begin{pmatrix}
 \re(\vecS{y})\\ 
 \im(\vecS{y})\end{pmatrix}\right\|_{\infty}^2$, being $\vecS{y} = \text{vec}(\mbf{S}^{\mbf{y}}(f))$. The value $\lambda^*$ has been shown to be an upper bound for the optimization problem (\ref{eq:l1_opt_pb}) to admit a non-null solution \cite{gerstoft2015}. 

The performance of the proposed approach were compared to that of the classical two--step approach described in Section \ref{sec:two_step}. In detail, we first computed the Tikhonov estimator $\{\mbf{x}_\lambda(t)\}_{t=1}^T$ of the neural sources as in equation (\ref{eq:mne_onestep}) by using the coarse operator $\mbf{G}$ including $n=644$ source locations used also within Algorithm \ref{alg:fista_one_step}. Four different regularization parameters were tested, namely $\lambda = \xi \ 10^{-SNR/10}$, with $\xi \in \{0.1, 1, 10, 100\}$. For each value of the parameter, we then computed the Welch’s estimator, $\mbf{S}^{\mbf{x}_\lambda}(f)$, of the time--series $\left\{\mbf{x}_\lambda(t)\right\}_{t \in T}$.

Let $\widehat{\mbf{S}}$ be an estimate of the cross-power spectrum obtained with either the proposed method or the classical two--step approach. In order to quantitatively compare the two approaches, we separated the real and the imaginary part of $\widehat{\mbf{S}}$. For both of them, we then computed a weighted sum over all the estimated pairwise interaction exceeding a given threshold of the euclidean distance between the interacting-source locations and the closest pair of sources truly connected in sense of the Wasserstein 2-distance. More formally, denoted with $\mathcal{V} = \left\{ \mbf{v}_1, \dots, \mbf{v}_{6940} \right\}$ and $\mathcal{W} = \left\{ \mbf{w}_1, \dots, \mbf{w}_{644} \right\}$ the source spaces associated to the forward operators used for simulating the data and within the inverse procedures, respectively, we defined
\begin{equation}\label{eq:eval_crit_re}
\textrm{Err}^\re = \sum_{(\mbf{w}_i, \mbf{w}_j) \in \mathcal{E}^{rec}} \frac{\left|\re\left(\widehat{S}_{ij}\right)\right|}{\underset{i<j}{\max} \left|\re\left(\widehat{S}_{ij}\right)\right|} \underset{(\mbf{v}_p, \mbf{v}_q) \in \mathcal{E}^{true}}{\min} d\left((\mbf{w}_i, \mbf{w}_j), (\mbf{v}_p, \mbf{v}_q) \right),
\end{equation}
where $\mathcal{E}^{rec} = \left\{ (\mbf{w}_i, \mbf{w}_j) \in \mathcal{W} \times \mathcal{W}\ |\ i < j \ \text{and} \ \left|\re\left(\widehat{S}_{ij}\right)\right| \geq \tau \right\}$ collects the pairs of source location between which the estimated cross spectrum exceeds a given threshold, namely $\tau = 0.5 \ \underset{i<j}{\max}\left\{\left|\re\left(\widehat{S}_{ij}\right)\right| \right\}$;  $\mathcal{E}^{true} \subset \mathcal{V} \times \mathcal{V}$ is the set of truly connected sources, and $$d\left((\mbf{w}_i, \mbf{w}_j), (\mbf{v}_p, \mbf{v}_q) \right) = \sqrt{\frac{1}{2}\min\left\{ ||(\mbf{w}_i, \mbf{w}_j) - (\mbf{v}_p, \mbf{v}_q) ||_2^2 ,  ||(\mbf{w}_i, \mbf{w}_j) - (\mbf{v}_q, \mbf{v}_p) ||_2^2 \right\}}$$
is the Wasserstein 2-distance \cite{hoffman2004}. 

Similarly, $\textrm{Err}^\im$ was defined as in (\ref{eq:eval_crit_re}) by replacing $\re\left(\widehat{S}_{ij}\right)$ with $\im\left(\widehat{S}_{ij}\right)$.

\subsection{Results}
\label{ssec:results}
Table \ref{tab:res_sparsity} illustrated the behavior of the proposed one--step approach when varying the amount of regularization. As expected, the higher the value of the regularization parameter the sparser the resulting estimation of the cross--power spectrum. In detail, for the highest value of the parameter, namely $\lambda_4 := 0.1 \, \lambda^*$, for many simulated data the resulting estimation of the cross-power spectrum does not show any non-null interaction. On the other hand, for the lowest value of the parameter, namely $\lambda_1 := 0.01 \, \lambda^*$, only in two dataset for both the configurations the imaginary part of the estimated cross-power spectrum was equal to zero, but the number of supra-threshold connections increased up to few hundreds for Configuration 1 and few thousands for Configuration 2. We recall that the threshold $\tau$ was set equal to half the value of the strongest connection.

\begin{table}[h!]
 \centering 
 \caption{Level of sparsity in the solution provided by the proposed one-step approach for decreasing values of the regularization parameters. For each configuration, each cell of the table shows in the first row the percentage of simulated data where the real (first column) or the imaginary (second column) part of the estimated cross-power spectrum show at least one non-null interaction; in the second row the minimum and maximum number of supra-threshold connections across these data; and in the third row the mean number of supra-threshold interactions.}\label{tab:res_sparsity}
	\begin{tabular}{|c||*{2}{c}||*{2}{c}||}	
    \hline
  & \multicolumn{2}{c||}{\textbf{Conf 1}} & \multicolumn{2}{|c|}{\textbf{Conf 2}} \\
    \cline{2-5} 
    & \textbf{Real part}  & \textbf{Imag. part}   & \textbf{Real part}  & \textbf{Imag. part} \\
    \hline 
\multirow{3}{*}{$\lambda_4$} & 98.0\% & 58.0\% & 100.0\% & 78.0\% \\
& (1, 87) & (1, 41) & (1, 292) & (1, 284) \\
& 4 & 4& 17 & 17 \\
\hline 
\multirow{3}{*}{$\lambda_3$} & 98.0\% & 76.0\% & 100.0\% & 88.0\% \\
&(1, 156) & (1, 318) & (1, 401) & (1, 221) \\
& 6 & 18 & 19 & 14 \\
\hline 
\multirow{3}{*}{$\lambda_2$} &100.0\% & 86.0\% & 100.0\% & 92.0\% \\
&(1, 281) & (1, 462) & (1, 1013) & (1, 526) \\
& 20 & 31 & 44 & 30 \\
\hline 
\multirow{3}{*}{$\lambda_1$} &100.0\% & 96.0\% & 100.0\% & 96.0\% \\
&(1, 260) & (1, 325) & (1, 4303) & (1, 2942) \\
& 20 & 42 & 144 & 120 \\
 \hline  
 \end{tabular}	
  \end{table}

Fig. \ref{fig:sim26_config2} shows the main advantages of the proposed method over the classical two-step approach. In this example, both method identified connected sources nearby the truly interacting ones, however the number of false positive is much higher for the two--step approach. Further, for the two--step approach the value of the cross-power spectrum is underestimated of several orders of magnitude.

\begin{figure}[h!]
    \centering
    \includegraphics[width=1.\linewidth]{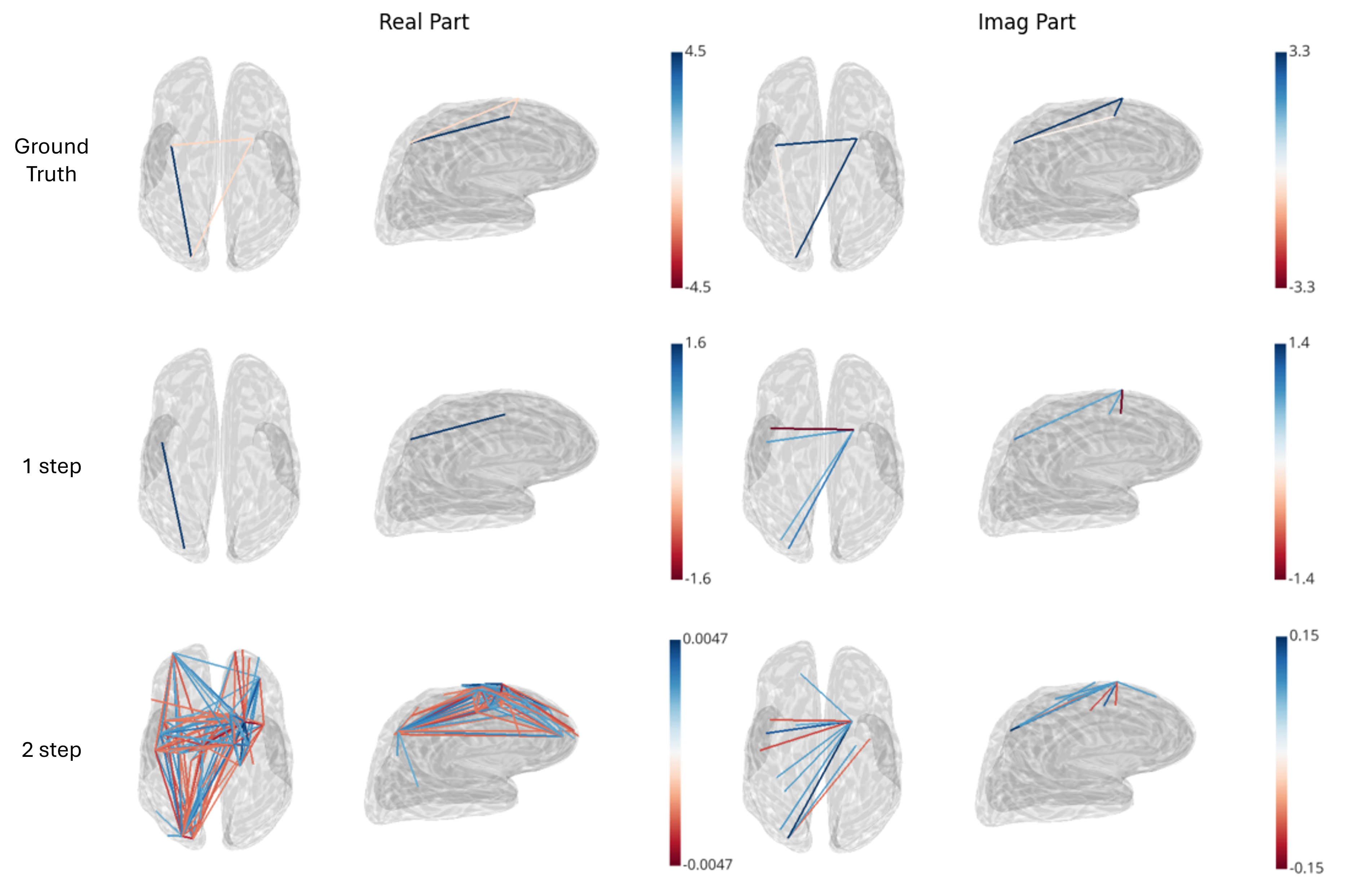}
    \caption{Original and estimated cross-power spectrum for one MEG data simulated by using Configuration 2. In this simulation $\textrm{Err}^\re$ is equal to 0.077 for the one--step approach and to 0.521 for the two--step, while $\textrm{Err}^\im$ is equal to 0.019 and 0.086 respectively. }
    \label{fig:sim26_config2}
\end{figure}

More in general, in our experiments, the one-step approach outperformed the classical two-step method as demonstrated by Fig. \ref{fig:quant_comp}. More specifically, for both the approaches we selected for each simulated data the best regularization parameter among those tested as the one minimizing the sum $\textrm{Err}^\re + \textrm{Err}^\im$. When using these optimal parameters the average error of the one--step approach is systematically lower than that of the two--step approach for both the real and the imaginary part of the cross--power spectrum in both the configurations. This is due to the fact that the number of spurious interactions is much higher for the two--step approach than for the proposed method, as shown in the second row of Fig. \ref{fig:quant_comp}.

\begin{figure}[h!]
\centering
   \includegraphics[width=0.7\linewidth]{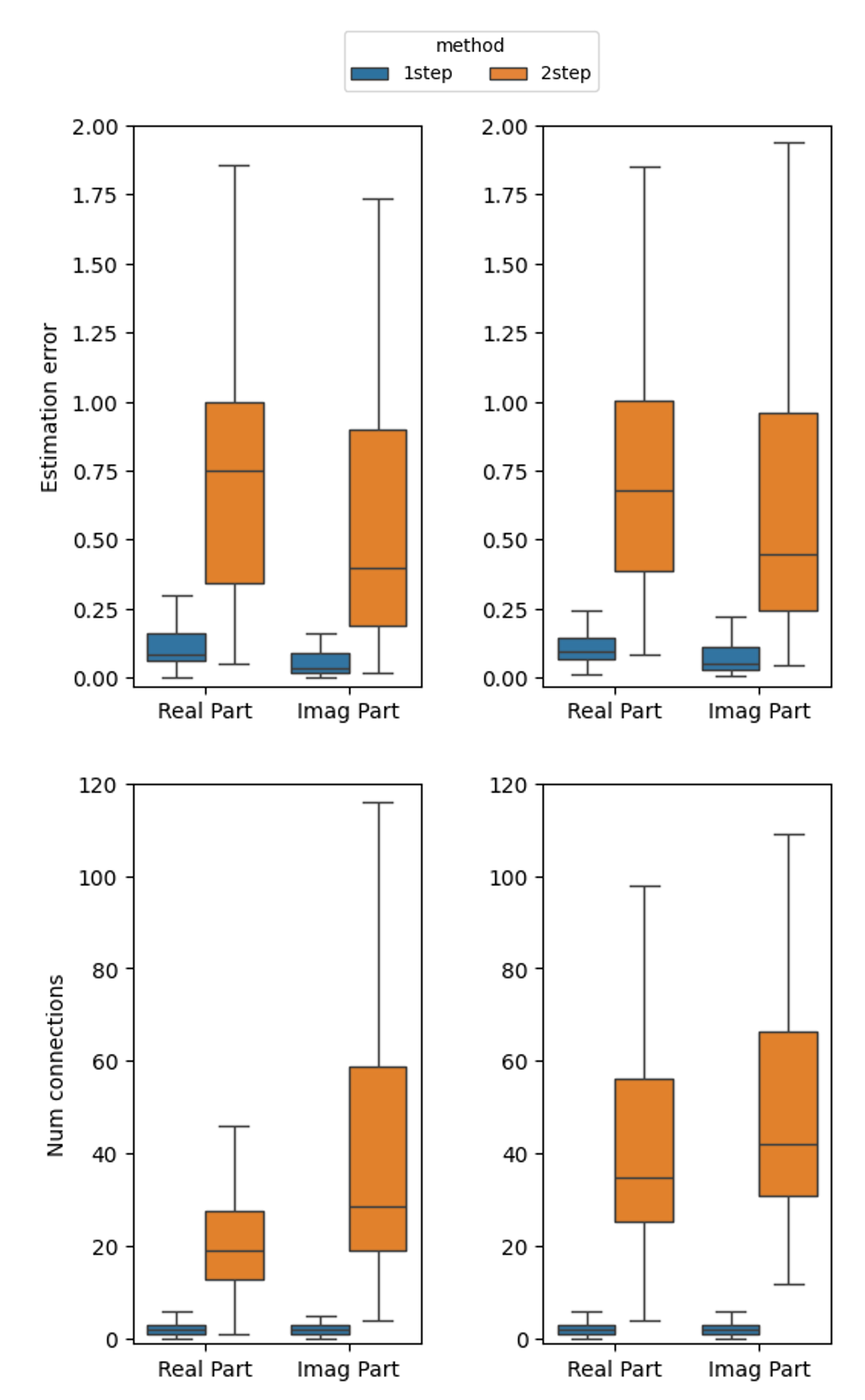}
    \caption{Quantitative comparison of the performance of the one-step and the two-step approach. First row: error in estimating the real and the imaginary part of the cortical cross-power spectrum. Second row: number of supra-threshold connections in the real and imaginary part of the estimated cross-power spectrum. In each row the left-side panel refers to the first simulated brain configuration where only 1 pair of truly connected sources are present, while the right-side panel refers to the second brain configuration involving two pairs of interacting sources. Boxplots summarize results across 50 different simulated data. For the ease of visualization outliers were omitted.}\label{fig:quant_comp}
\end{figure}

\section{Conclusions}\label{sec:conclusions.}

The estimation of  the brain cortical cross-power spectrum from M/EEG data is typically achieved in two steps: first the Tikhonov's least square estimator $\left\{\mbf{x}_\lambda(t)\right\}_{t \in \mathbb{R}}$ of the brain cortical activity is computed, and then the cross-power spectrum of $\left\{\mbf{x}_\lambda(t)\right\}_{t \in \mathbb{R}}$ is estimated through e.g. Welch's method. However this approach often results in a large number of false positives. This issue is partially overcome by deriving from the cross-power spectrum metrics, such as the imaginary part of coherency, that are insensitive to linear mixing of the truly interacting brain source. However, such metrics fail in identifying instantaneously correlated source. In this work we overcome this issue by suggesting a one-step approach that directly estimates the cross-power spectrum of the cortical sources from that of the recorded M/EEG time--series. The number of false positives is reduced by constructed an $\ell_1$--regularized least square estimator computed through FISTA. The proposed method leverages the tensorial structure of the global coefficient matrix. Therefore,  
we developed an efficient algorithm where the computational cost is primarily driven by operations involving the forward matrix $\mbf{G}$.
The advantages of our method over the classical two-step approach have been demonstrated on a large number of synthetic data mimicking different brain configurations.

From a methodological point of view, two are the main developments we are planning. The first one concerns the implementation of an automatic procedure to choose the regularization parameter, possibly employing a back-tracking strategies \cite{scheinberg2014}. The second one concerns implementing different approaches for solving the inverse problem in Eq. (\ref{eq:fwd_vec}), such as Bayesian Monte Carlo approaches \cite{sommariva2014}. The latter would have the advantage of also quantifying the uncertainty of the provided estimation even though at the price of an higher computational cost.

Finally, we observe that the current implementation of the proposed approach provide estimates of the cross-power spectrum at fixed frequencies. Future work will be devoted to investigate how to combine the information from multiple frequencies and/or frequencies ranges.
\begin{acknowledgements}
The authors are thankful to Dr. Martina Amerighi and Dr. Elisabetta Vallarino for their insightful discussions and suggestions.

L.C., I.F., and S.S. are member of “Gruppo Nazionale per il Calcolo Scientifico" (INdAM-GNCS) and their work is partially supported by INdAM - GNCS Project “Analisi e applicazioni di matrici strutturate (a blocchi)"  CUP E53C23001670001.

The work of I.F. is supported by $\#$NEXTGENERATIONEU (NGEU) and funded by the Ministry of University and Research (MUR), National Recovery and Resilience Plan (NRRP), project MNESYS (PE0000006) – A Multiscale integrated approach to the study of the nervous system in health and disease (DN. 1553 11.10.2022) for the part concerning the analysis of system structure for computation optimization, and its application to
brain connectivity.

S.S. acknowledges the support of the PRIN PNRR 2022 Project 'Computational mEthods for  Medical Imaging (CEMI)' 2022FHCNY3, cup: D53D23005830006 for the conceptualization of the core optimization method in the context of multivariate statistical analysis.\\

\textbf{Data Availability}
The codes use for generating and analysing the datasets used  during the current study are available in the GitHub repository, \url{https://github.com/theMIDAgroup/fista_cps_conn}
\end{acknowledgements}

\end{document}